  \providecommand\BibTeX{{%
    \normalfont B\kern-0.5em{\scshape i\kern-0.25em b}\kern-0.8em\TeX}}}
\newcolumntype{L}[1]{>{\raggedright\let\newline\\\arraybackslash\hspace{0pt}}m{#1}}
\newcolumntype{C}[1]{>{\centering\let\newline\\\arraybackslash\hspace{0pt}}m{#1}}
\newcolumntype{R}[1]{>{\raggedleft\let\newline\\\arraybackslash\hspace{0pt}}m{#1}}
\newcommand{\alphaadj}[0]{\ensuremath{\alpha_\textit{corr}}}
\newcommand{\minv}{\ensuremath{m^{-1}}}
\newcommand{\mtable}{\ensuremath{\operatorname{mTable}}}
\newcommand{\failprob}{\ensuremath{P_{\operatorname{fail}}}}
\newcommand{\successprob}{\ensuremath{P_{\operatorname{succ}}}}
\newcommand{\algoRecursive}[0]{{\sc SuccessProbability}\xspace}
\newcommand{\algoBinomBinary}[0]{{\sc AlphaAdjustment}\xspace}
\newcommand{\algoMtable}[0]{{\sc ConstructMTable}\xspace}
\newcommand{\nosemic}{\renewcommand{\@endalgocfline}{\relax}}
\newcommand{\dosemic}{\renewcommand{\@endalgocfline}{\algocf@endline}}
\newcommand{\pushline}{\Indp}
\newcommand{\popline}{\Indm}
\begin{document}


\title{A Note on the Significance Adjustment for FA*IR with Two Protected Groups}

\author{Meike Zehlike}
\affiliation{
	\institution{Humboldt Universit\"at zu Berlin}
}
\affiliation{
	\institution{Max-Planck-Inst. for Software Systems}
	\streetaddress{Campus E1 5}
	\postcode{66123}
	\city{Saarbr\"ucken}
	\country{Germany}
}
\email{meikezehlike@mpi-sws.org}

\author{Tom S\"uhr}
\affiliation{
	\institution{Technische Universit\"at Berlin}
\country{Germany}}
\email{tom.suehr@googlemail.com}

\author{Carlos Castillo}
\affiliation{
	\institution{Universitat Pompeu Fabra}
\city{Barcelona}
\country{Spain}}
\email{chato@acm.org}
\renewcommand{\shortauthors}{Zehlike et al.}

\begin{abstract}
\textbf{Abstract: }In this report we provide an improvement of the significance adjustment from the FA*IR algorithm in \citet{zehlike2017fair}, which did not work for very short rankings in combination with a low minimum proportion $p$ for the protected group.
We show how the minimum number of protected candidates per ranking position can be calculated exactly and provide a mapping from the continuous space of significance levels ($\alpha$) to a discrete space of tables, which allows us to find $\alphaadj$ using a binary search heuristic.

\end{abstract}




\maketitle

In this report we describe a correction of the significance adjustment procedure from~\cite{zehlike2017fair}, which did not work for very small $k$ and $\alpha$.

For binomial distributions, i.e. where only one protected and one non-protected group is present, the inverse CDF can be stored as a simple table, which we compute using Algorithm~\ref{alg:constructMTable}.
We will call such a table \textit{mTable}.
\vspace{-2mm}
\begin{algorithm}[h]
	\caption{Algorithm \algoMtable computes the data structure to efficiently verify or construct a ranking that satisfies binomial ranked group fairness.}
	\label{alg:constructMTable}
	\small
	\AlgInput{$k$, the size of the ranking to produce; $p$, the expected proportion of protected elements; $\alphaadj$, the significance for each individual test.}
	\AlgOutput{$ \mtable $: A list that contains the minimum number of protected candidates required at each position of a ranking of size $k$.}
	$\mtable \leftarrow [k]$ \AlgComment{list of size $k$}
	\For{$i \leftarrow 1$ \KwTo $k$}{
		$\mtable[i] \leftarrow F^{-1}(i,p,\alphaadj)$ \AlgComment{the inverse binomial cdf}
	}
	\Return{$ \mtable $ }
\end{algorithm}
\vspace{-2mm}

Table~\ref{tbl:ranked_group_fairness_table} shows an example of mTables for different $ k $ and $ p $, using $\alpha=0.1$.
For instance, for $p=0.5$ we see that at least 1 candidate from the protected group is needed in the top 4 positions, and 2 protected candidates in the top 7 positions.
\begin{table}[ht!]
	\small\begin{tabular}{r|cccccccccccc}
		\diaghead{some text}%
		{p}{k}&
		1 & 2 & 3 & 4 & 5 & 6 & 7 & 8 & 9 & 10 & 11 & 12 \\ \midrule
		0.1      & 0 & 0 & 0 & 0 & 0 & 0 & 0 & 0 & 0 & 0  &  0 &  0 \\
		0.3      & 0 & 0 & 0 & 0 & 0 & 0 & 1 & 1 & 1 & 1  &  1 &  2 \\
		0.5      & 0 & 0 & 0 & 1 & 1 & 1 & 2 & 2 & 3 & 3  &  3 &  4 \\
		0.7      & 0 & 1 & 1 & 2 & 2 & 3 & 3 & 4 & 5 & 5  &  6 &  6 \\
		\bottomrule
	\end{tabular}
	\caption{Example values of $m_{\alpha,p}(k)$, the minimum number of candidates in the protected group that must appear in the top $k$ positions to pass the ranked group fairness criteria with $\alpha=0.1$ in a binomial setting.}
	\label{tbl:ranked_group_fairness_table}
\end{table}

Figure~\ref{fig:why-adjustment-is-needed-binomial} shows that we need a correction for $\alpha$ as we are testing multiple hypothesis in the ranked group fairness test, namely $k$ of them (note that the scale is logarithmic).
In the following, we show that the special case of having only one protected group offers possibilities for verifying ranked group fairness efficiently.
A key advantage of considering just one protected group, rather than multiple, is that we can calculate the exact failure probability $\failprob$ (i.e. a fair ranking gets rejected by the ranked group fairness test), which results in an efficient binary search for $\alphaadj$.
%

First we introduce the necessary notation for the binomial case and describe how we calculate the exact $\failprob$.
Then we show that we can divide the continuum of possible $\alpha$ values in discrete parts in order to be able to apply efficient binary search for the most accurate $\alphaadj$.
Last we analyze the complexity of the proposed algorithms.
%
\begin{figure}[t!]
	\centering
	{\includegraphics[width=.48\textwidth]{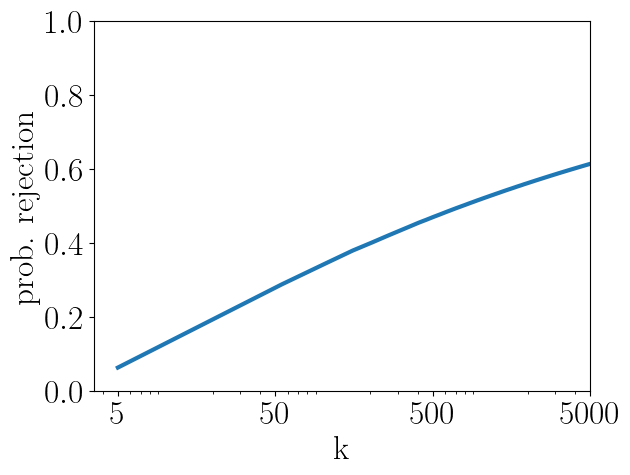}}
	\caption{
		Probability that a fair ranking created by a Bernoulli process with $p=0.5$ fails the ranked group fairness test.\label{fig:why-adjustment-is-needed-binomial}
		Experiments on data generated by a simulation, showing the need for multiple tests correction.
		The data has one protected group with a ranking created by  a Bernoulli process (Fig.~\ref{fig:why-adjustment-is-needed-binomial}).
		Rankings should have been rejected as unfair at a rate $\alpha = 0.1$.
		However, we see that the rejection probability increases with $k$.
		Note the scale of $k$ is logarithmic.}
	\label{fig:need-for-model-adjustment}
\end{figure}

\section{Success Probability for One Protected Group}\label{subsubsec:adjustment-binomial}

The probability $\successprob$ that a ranking created following the procedure shown by~\citet{yang2016measuring} passes the ranked group fairness test with parameters $p$ and $\alpha$ can be computed using the following procedure:
Let $m(k) = m_{\alpha,p}(k) = F^{-1}(k,p,\alpha)$ be the number of protected elements required up to position $k$.
Let $\minv(i) = k$ s.t. $m(k) = i$ be the position at which $i$ or more protected elements are required.
Let $b(i) = \minv(i) - \minv(i-1)$ (with $\minv(0) = 0$) be the size of a ``block,'' that is, the gap between one increase and the next in $m(\cdot)$.
We call the $k$-dimensional vector $(m(1), m(2), \ldots , m(k))$ a \emph{mTable}.
An example is shown on Table~\ref{tbl:05:example_blocks}.
\begin{table}[ht!]
	\centering
	\begin{tabular}{cccccccccccccc}\toprule
		$k$    & 1 & 2 & 3 & \textbf{{4}} & 5 & 6 & \textbf{7} & 8 & \textbf{9} & 10 & 11 & \textbf{12} \\
		\midrule
		$m(k)$ & 0 & 0 & 0 & \multicolumn{1}{c|}{1} & 1 & 1 & \multicolumn{1}{c|}{2} & 2 & \multicolumn{1}{c|}{3} & 3  & 3  & \multicolumn{1}{c}{4}\\
		Inverse   & \multicolumn{4}{c|}{$\minv(1)=4$}
		& \multicolumn{3}{c|}{$\minv(2)=7$}
		& \multicolumn{2}{c|}{$\minv(3)=9$}
		& \multicolumn{3}{c}{$\minv(4)=12$}\\
		Blocks       & \multicolumn{4}{c|}{$b(1)=4$}
		& \multicolumn{3}{c|}{$b(2)=3$}
		& \multicolumn{2}{c|}{$b(3)=2$}
		& \multicolumn{3}{c}{$b(4)=3$}\\
		\bottomrule
	\end{tabular}
	\caption[Example of different block sizes]{Example of $m(\cdot)$, $\minv(\cdot)$, and $b(\cdot)$ for $p=0.5, \alpha=0.1$.}
	\label{tbl:05:example_blocks}
\end{table}

\noindent Furthermore let
\begin{equation}
	\label{eq:05:combinations}
	I_{m(k)} = \{ v = (i_1, i_2, \ldots, i_{m(k)}): \forall \ell' \in \lbrace 1,\ldots,m(k) -1 \rbrace, 0 \le i_{\ell'} \le b(\ell') \wedge \sum_{j=1}^{\ell'} i_j \ge \ell' \}
\end{equation}
represent all possible ways in which a fair ranking\footnote{Note that we do not consider rankings of size 0, which always pass the test.} generated by the method of \citet{yang2016measuring} can pass the ranked group fairness test, with $i_j$ corresponding to the number of protected elements in block $j \; (\text{with } 1 \le j \le k)$.
As an example consider again Table~\ref{tbl:05:example_blocks}: the first block contains four positions, i.e. $b(1)=4$ and this block passes the ranked group fairness test, if it contains at least one protected candidate, hence $i_1 \in \{1, 2, 3, 4\}$.
The probability of considering a ranking of $k$ elements (i.e. $m(k)$ blocks) unfair, is:
\begin{equation}
	\label{eq:05:failureProb}
	\failprob = 1 - \successprob = 1 - \sum_{v \in I_{m(k)}} \prod_{j=1}^{m(k)} f(v_j; b(j), p)
\end{equation}
\noindent where $f(x;b(j),p) = Pr(X = x)$ is the probability density function (PDF) of a binomially distributed variable $X \sim Bin(b(j), p)$.
However, if calculated naively this expression is intractable because of the large number of combinations in $I_{m(k)}$.

\setlength{\textfloatsep}{2pt}
\begin{algorithm}[t!]
	\caption{Algorithm \algoRecursive computes the probability, that a given mTable accepts a fair ranking (see right term of Eq.~\ref{eq:05:failureProb}).}
	\label{alg:05:successProb} 
	\small
	\AlgInput{
		$\texttt{b[]}$ list of block lengths (Table~\ref{tbl:05:example_blocks}, 3rd line);\\ 
		$\texttt{maxProtected}$ the sum of all entries of $\texttt{b[]}$;\\
		$\texttt{currentBlockIndex}$ index of the current block; \\
		$\texttt{candidatesAssigned}$ number of protected candidates assigned for the current possible solution; \\
		$p$, the expected proportion of protected elements.}
	\AlgOutput{The probability of accepting a fair ranking.}
	
	\If{$\texttt{b[].length} = 0$}{
		\Return{$1$}
	}
	\tcp{we need to assign at least one protected candidate to each block}
	$\texttt{minNeededThisBlock} \leftarrow \texttt{currentBlockIndex} - \texttt{candidatesAssigned}$\\
	\tcp{if we already assigned enough candidates, minNeededThisBlock = 0 (termination condition for the recursion)}
	\If{$\texttt{minNeededThisBlock} < 0$}{
		$\texttt{minNeededThisBlock} \leftarrow 0$
	}
	$\texttt{maxPossibleThisBlock} \leftarrow \textit{argmin}(\texttt{b[0]}, \texttt{maxProtected})$ \\
	$\texttt{assignments} \leftarrow 0$ \\
	$\texttt{successProb} \leftarrow 0$ \\
	\tcp{sublist without the first entry of $\texttt{b[]}$}
	$\texttt{b\_new[]} \leftarrow \textit{sublist}(\texttt{b[]}, 1, \texttt{b[].length})$ \label{algoline:05:suffixes}\\
	$\texttt{itemsThisBlock} \leftarrow \texttt{minNeededThisBlock}$\\
	\While{$\texttt{itemsThisBlock} \leq \texttt{maxPossibleThisBlock}$}{
		$\texttt{remainingCandidates} \leftarrow \texttt{maxProtected} - \texttt{itemsThisBlock}$ \\
		$\texttt{candidatesAssigned} \leftarrow \texttt{candidatesAssigned} + \texttt{itemsThisBlock}$ \\
		\tcp{each recursion returns the success probability of \emph{all possible ways} to fairly rank protected candidates after this block}
		$\texttt{suffixSuccessProb} \leftarrow \textsc{\algoRecursive} ( $ \\ \pushline $ \texttt{remainingCandidates},\texttt{b\_new[]}, \texttt{currentBlockIndex} + 1,$ \\ $ \texttt{candidatesAssigned})$ 
		\label{algoline:05:recursion}\\
		\popline $\texttt{totalSuccessProb} \leftarrow \texttt{totalSuccessProb} \; + $ \\ \pushline $ \textsc{PDF}(\texttt{maxPossibleThisBlock}, \texttt{itemsThisBlock}, p) \; \cdot $ \\ $ \texttt{suffixSuccessProb}$ \label{algoline:05:pdf}\\
		\popline $\texttt{itemsThisBlock} \leftarrow \texttt{itemsThisBlock} + 1$\\
	}
	\Return{probability of accepting a fair ranking: $\texttt{totalSuccessProb}$ }
\end{algorithm}
We therefore propose a dynamic programming method, Algorithm~\ref{alg:05:successProb}, which computes the probability that a fair ranking passes the ranked group fairness test (i.e. the right term of Equation~\ref{eq:05:failureProb}) \emph{recursively}.
Note that because of the combinatorial complexity of the problem a \emph{simple closed-form expression} to compute $\failprob$ is unlikely to exist.
%
%
%
The algorithm breaks the vector $v = (i_1, i_2, \ldots ,i_{\ell})$ of Equation~\ref{eq:05:combinations} into a \textit{prefix} and a \textit{suffix} (Alg.~\ref{alg:05:successProb}, Line~\ref{algoline:05:suffixes}).
We call $i_1$ the \textit{prefix} of $(i_2, \ldots, i_{\ell})$, and $(i_2, \ldots, i_{\ell})$ the \textit{suffix} of $i_1$.
The algorithm starts with a prefix and calculates all possible suffixes, that pass the ranked group fairness test, recursively (Line~\ref{algoline:05:recursion}).
Consider the following example: for the first prefix $i_1 = 1$ the algorithm computes all possible suffixes, where we rank exactly one protected candidate in the first block.
For this prefix $i_1$, combined with each possible suffix $v \setminus i_1$, we calculate the success probability $\prod_{j=1}^{m(k)} f(v_j; b(j), p)$ for each instance of $v$ (Line~\ref{algoline:05:pdf}).
In the next recursion level we start with a new prefix, let us say $i_1=1, i_2 =1$.
The algorithm computes all possible suffixes, i.e. all rankings where we rank exactly one protected candidates in the first block and one protected candidate in the second block.
Then it computes the respective success probabilities.
This procedure continues for $m(k)$ iterations.
After that the whole program starts again with $i_1=2$ and is repeated until the maximum number of protected candidates is reached, in our case $b(1)=4$.
All intermediate success probabilities are added up (Line~\ref{algoline:05:pdf}) to the total success probability (see Eq.~\ref{eq:05:failureProb}) of the mTable that was created given $k, p, \alpha$.

Note that there are at most $\prod_{j=1}^{m(k)}b(j)$ possible combinations to distribute the protected candidates within the blocks.
Furthermore many $v \in I_{m(k)}$ share the same prefix and hence have the same probability density value for these prefixes.
To reduce computation time the algorithm stores the binomial probability density value for each prefix in a hash map with the prefix as key and the respective pdf as value.
Thus the overall computational complexity becomes $O(\prod_{j=1}^{m(k)}b(j) \cdot O(\texttt{binomPDF}))$.

\section{Finding the Correct mTable}\label{subsec:finding-mtable}
We call an mTable \textit{correct} if it has an overall success probability of $\successprob = 1-\alpha$.
However, given parameters $k,p,\alpha$, $\successprob$ will be greater or equal to $1-\alpha$. Thus, we need a corrected $\alphaadj \leq \alpha$ in order to compute the correct mTable. Unfortunately there is no way to compute $\alphaadj$ directly, which is why we have to search for the correct mTable, hence $\alphaadj$.
We propose Algorithm~\ref{alg:05:binarySearch} that takes  parameters $k, p, \alphaadj$ as input and returns the correct mTable and $\alphaadj$.
%
%
It sequentially creates mTables (recall that these are $k$-dimensional vectors of the form $(m(1), m(2), \ldots , m(k))$) for different values of $\alphaadj$, and then calls Algorithm~\ref{alg:05:successProb} to calculate their success probability until it finds the correct mTable with overall failure probability $\failprob = \alpha$ .
Our goal is to use binary search to select possible candidates for $\alphaadj$ systematically.

However, to be able to do binary search, we need a discrete measure for the $\alpha$-space to search on, otherwise the search would never stop. Specifically, we could never be sure if we found the mTable with the minimum difference of $\successprob$ to $1-\alpha$. A binary search would further and further divide an interval between two $\alpha$ values. The only chance to verify that we do not have to search further is by comparing the resulting mTables of different $\alpha$ values. We will see, that we can do that by comparing the sum of the entries in the mTable and that there exist only a limited number of them.
Furthermore, to reduce complexity we only want to consider mTables with certain properties, which we define in the following paragraph.
A $k$-dimensional vector (e.g. $(0,0,1,2,3)$) has to have two properties in order to constitute a mTable, rather than just a vector of natural numbers: it has to be \emph{valid} and \emph{legal}.
\begin{definition}[Valid mTable]
	\label{def:05:valid-mtable}
	The $\text{mTable}_{p,k,\alpha}=(m(1) , m(2) , \ldots , m(k))$ is \emph{valid} if and only if, $m(i) \leq m(j)$ for all $i,j \in \lbrace 0, \ldots, k \rbrace$ with $i < j$ and $m(i)=n \Rightarrow m(i+1) \leq n+1$.
\end{definition}
\noindent It is easy to see that many valid mTables exist.
They correspond to all $k$-dimensional arrays with integers monotonically increasing by array indices.
However we only want to consider those valid mTables for our ranked group fairness test that have been created by the statistical process in~\citet{yang2016measuring}.
We call these \textit{legal mTables}.
\begin{definition}[Legal mTable]
	\label{def:05:legal-mtable}
	A $\text{mTable}_{p,\alpha,k}$ is \textit{legal} if and only if there exists a $p,k,\alpha$ such that
	$\texttt{constructMTable}(p,k,\alpha)=\text{mTable}_{p,\alpha,k}$.
\end{definition}
Definition \ref{def:05:legal-mtable} restricts the space of k-dimensional arrays to those which are computed by a specific function. Since the mTable is a datastructure that should represent the minimum proportions required for a specific dice roll, we have to define $\texttt{constructMTable}$ such that it represents this process. Otherwise we could think of various ways to define processes to compute possible mTables.
\begin{definition}[constructMTable]
	\label{def:05:construct-mtable-single-test}
	For $p\in [0,1], k \in \mathbb{N}, \alpha \in [0,1]$ we define a function to construct a mTable from input parameters $p, k, \alpha$ according to \cite{yang2016measuring}.

	\noindent$\texttt{constructMTable} : \\ (0,1) \times \mathbb{N} \times [0,1] \longrightarrow \lbrace (m(1) ,\ldots, m(k)): m(i) = F^{-1}(i,p,\alpha), \, i = \{1,\ldots,k\}\rbrace$ \\
	with $\texttt{constructMTable}(p,k,\alpha)=\text{mTable}_{p,k,\alpha}$.
\end{definition}
\begin{lemma}
	\label{lemma:05:legal-valid-mtable}
	If a mTable is legal, it is also valid.
\end{lemma}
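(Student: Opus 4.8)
The plan is to unfold the definition of \emph{legal} and then verify the two defining conditions of \emph{validity} directly from elementary properties of the binomial inverse CDF. By Definitions~\ref{def:05:legal-mtable} and~\ref{def:05:construct-mtable-single-test}, a legal mTable has entries $m(i) = F^{-1}(i,p,\alpha)$ for a single fixed pair $p,\alpha$ and $i = 1,\ldots,k$. It is convenient to fix the convention that $m(i)$ is the \emph{largest} threshold $m$ for which a fair prefix of length $i$ is still accepted, i.e. $m(i) = \max\{\, m \ge 0 : \Pr[\mathrm{Bin}(i,p) < m] \le \alpha \,\}$, and to write $F_i(t) = \Pr[\mathrm{Bin}(i,p) \le t]$ for the binomial CDF, so that $\Pr[\mathrm{Bin}(i,p) < m] = F_i(m-1)$. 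With this convention the two facts I will use are feasibility of $m(i)$, namely $F_i(m(i)-1) \le \alpha$, and maximality of $m(i)$, namely $F_i(m(i)) > \alpha$ (since $m(i)+1$ lies outside the accepting set). Validity then reduces to the two local claims $m(i) \le m(i+1)$ and $m(i+1) \le m(i)+1$ for every $i$; the global monotonicity $m(i)\le m(j)$ for $i<j$ in Definition~\ref{def:05:valid-mtable} follows from the former by induction.

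The single tool I would use for both local claims is the coupling $\mathrm{Bin}(i+1,p) \stackrel{d}{=} \mathrm{Bin}(i,p) + \mathrm{Bern}(p)$, expressing that one extra ranking position can only add a protected candidate. For monotonicity, writing $Y = X + B$ with $X\sim\mathrm{Bin}(i,p)$ and $B\sim\mathrm{Bern}(p)$ gives $Y \ge X$ pathwise, hence the stochastic dominance $\Pr[\mathrm{Bin}(i+1,p) < m] \le \Pr[\mathrm{Bin}(i,p) < m]$ for every $m$. Thus any threshold that is accepted at length $i$ remains accepted at length $i+1$; applying this to $m = m(i)$ shows $m(i)$ is feasible at $i+1$, so the maximal feasible threshold can only grow, $m(i+1) \ge m(i)$.

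For the step-by-one condition I would use the same coupling in its explicit CDF form, $F_{i+1}(t) = (1-p)\,F_i(t) + p\,F_i(t-1)$. Maximality of $m(i)$ gives the strict inequality $F_i(m(i)) > \alpha$ (this also subsumes the boundary $m(i)=i$, where $F_i(i)=1$). Evaluating the identity at $t = m(i)+1$ and bounding $F_i(m(i)+1) \ge F_i(m(i))$ by monotonicity of the CDF yields $F_{i+1}(m(i)+1) \ge F_i(m(i)) > \alpha$, which says precisely that the threshold $m(i)+2$ is \emph{not} accepted at length $i+1$. Hence $m(i+1) \le m(i)+1$, completing the verification of validity.

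The main obstacle I anticipate is not the probabilistic argument, which collapses to the one-line CDF identity above, but pinning down the exact convention for $F^{-1}$ so that the two sharp facts I rely on hold as stated, especially the \emph{strict} inequality $F_i(m(i)) > \alpha$ encoding maximality, which is what powers the step-by-one bound. The proof must therefore commit to $F^{-1}$ returning the maximal accepting threshold with the matching strict/non-strict comparisons against $\alpha$, and should note that this convention automatically confines $m(i)$ to $\{0,\ldots,i\}$ (since $F_i(t)=1>\alpha$ for $t\ge i$). Once the convention is fixed, both validity conditions follow directly from the coupling and no further calculation is needed.
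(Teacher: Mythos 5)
Your proof is correct, but it is worth knowing that the paper itself offers no argument at all for this lemma: it states only that the result ``follows directly by construction.'' Your proposal is therefore a genuine, fully worked-out proof of what the paper leaves implicit, and its two halves line up exactly with the two clauses of Definition~\ref{def:05:valid-mtable}. The monotonicity clause $m(i)\le m(i+1)$ is presumably what the authors mean by ``by construction,'' and your stochastic-dominance argument via the coupling $\mathrm{Bin}(i+1,p)\stackrel{d}{=}\mathrm{Bin}(i,p)+\mathrm{Bern}(p)$ makes it precise. The step-by-one clause $m(i+1)\le m(i)+1$ is the part that genuinely needs an argument, and your use of the CDF recursion $\Pr[\mathrm{Bin}(i+1,p)\le t] = (1-p)\Pr[\mathrm{Bin}(i,p)\le t] + p\Pr[\mathrm{Bin}(i,p)\le t-1]$ evaluated at $t=m(i)+1$, together with maximality of $m(i)$, settles it cleanly. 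Your caveat about fixing the convention for $F^{-1}$ is also well taken and not pedantic: the paper is itself inconsistent on this point (the proof of Lemma~\ref{lemma:05:non-decreasing-with-alpha-mtable} defines $m(i)$ as the smallest integer with $\alphaadj \le F(m(i))$, a non-strict comparison, whereas the original FA*IR test rejects when the CDF is $\le\alpha$, matching your strict convention); the two conventions disagree exactly at ties $F(m)=\alpha$, so committing to one, as you do, is necessary for the sharp inequalities your argument uses. The only degenerate case your convention does not cover is $\alpha=1$, where the maximum defining $m(i)$ does not exist, but this case is vacuous in practice and the paper's own definitions suffer the same defect.
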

\noindent Lemma \ref{lemma:05:legal-valid-mtable} follows directly by construction.
Now we need a discrete partition of the continuous $\alpha$ space, that is a discrete measure that corresponds to exactly one legal mTable for a given set of parameters $k,p,\alpha$.
We call this measure the \emph{mass} of a mTable.
%
%
%
%
\begin{definition}[Mass of a mTable]
	\label{def:05:Mass of a MTable}
	For $\text{mTable}_{p,k,\alpha}=(m(1) , m(2) , \ldots , m(k))$ we call\\
	$L_1(\text{mTable}_{p,k,\alpha})=\sum_{i=1}^k m(i)$ the \textit{mass of $\text{mTable}_{p,\alpha,k}$}.
\end{definition}
In the following we relate the continuous $\alpha$-space to the discrete mass of a mTable.
\begin{lemma}
	\label{lemma:05:non-decreasing-with-alpha-mtable}
	Every $\text{mTable}_{p,k,\alpha}=\texttt{constructMTable}(p,k,\alpha)=(m(1) , m(2) , \ldots , m(k))$ is non-decreasing with $\alpha$. This means that
	$\texttt{constructMTable}(p,k,\alpha - \epsilon) = (m(1)' , m(2)' , \ldots , m(k)')$ will result in $m(i)' \leq m(i)$ for $i=1,\ldots,k$ and $\epsilon > 0$.
\end{lemma}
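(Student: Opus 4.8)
The plan is to reduce the whole claim to a single monotonicity property of the inverse binomial CDF, since by Definition~\ref{def:05:construct-mtable-single-test} each entry is defined independently as $m(i) = F^{-1}(i,p,\alpha)$. Because the entries do not interact, it suffices to show that for each fixed position $i$ and fixed $p$, the map $\alpha \mapsto F^{-1}(i,p,\alpha)$ is non-decreasing in $\alpha$; then decreasing $\alpha$ to $\alpha-\epsilon$ can only decrease (or leave unchanged) every coordinate, which is exactly the statement $m(i)' \le m(i)$ for all $i$. So I would first state explicitly that the lemma decouples position-by-position, and spend the rest of the argument on one coordinate.

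The key step is to pin down the convention for $F^{-1}$. Here $F(x;i,p) = \Pr[X \le x]$ is the CDF of $X \sim \mathrm{Bin}(i,p)$, and $m(i) = F^{-1}(i,p,\alpha)$ is the largest threshold $m$ such that $\Pr[X < m] \le \alpha$ (equivalently, the $\alpha$-quantile governing how many protected elements are \emph{required}). Writing $G(m) = \Pr[X \le m-1] = \sum_{x=0}^{m-1} \binom{i}{x} p^x (1-p)^{i-x}$, the required count is $m(i) = \max\{\, m \ge 0 : G(m) \le \alpha \,\}$. First I would observe that $G$ is non-decreasing in $m$ (it is a partial sum of non-negative PDF terms), so the feasible set $\{m : G(m) \le \alpha\}$ is a down-set in $m$. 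Then I would argue that shrinking the budget from $\alpha$ to $\alpha-\epsilon$ can only shrink this feasible set: if $G(m) \le \alpha - \epsilon$ then certainly $G(m) \le \alpha$, so every $m$ admissible under $\alpha-\epsilon$ is admissible under $\alpha$. Taking the maximum over a smaller set yields $m(i)' = \max\{m : G(m) \le \alpha-\epsilon\} \le \max\{m : G(m) \le \alpha\} = m(i)$.

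The main obstacle is purely definitional rather than mathematical: I must make sure the chosen convention for $F^{-1}$ is the one that makes the quantile non-decreasing in $\alpha$. With the ``minimum required'' reading, a larger significance budget $\alpha$ tolerates a larger left-tail probability, hence permits demanding more protected candidates, so $m(i)$ grows with $\alpha$; this matches the intuition that a stricter test (smaller $\alpha$) should require fewer protected elements and reject fewer rankings. I would therefore open the proof by fixing this convention unambiguously, then note that the monotonicity of $G$ in $m$ together with the nesting $\{G \le \alpha-\epsilon\} \subseteq \{G \le \alpha\}$ gives the coordinatewise inequality immediately, and conclude by re-assembling the coordinates into the vector statement. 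No continuity or strictness is needed, so the argument stays at the level of monotone set inclusions and avoids any delicate analysis of where the CDF jumps.
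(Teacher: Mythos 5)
Your proof is correct and takes essentially the same route as the paper's: both reduce the claim coordinatewise to monotonicity of the inverse binomial CDF in $\alpha$, resting on the fact that partial sums of the non-negative binomial PDF are non-decreasing in the threshold. The only differences are cosmetic --- you argue directly via nested feasible sets (the paper argues by contradiction, i.e.\ the contrapositive of the same inclusion), and your convention for $F^{-1}$ (largest $m$ with $\Pr[X<m]\le\alpha$, versus the paper's smallest $m$ with $F(m)\ge\alpha$) can disagree only in the boundary case $F(m)=\alpha$, which has no bearing on the monotonicity argument.
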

\begin{proof}\label{proof:05:non-decreasing-with-alpha-mtable}
Every entry $m(i)$ for $i=1,\ldots,k$ is computed by line 3 of Algorithm~\ref{alg:constructMTable}, i.e. every entry is the inverse binomial cdf $F^{-1}(i,p,\alphaadj)$. In other words $m(i)$ is the smallest integer such that
\begin{equation}
\alphaadj \leq \sum_{j=0}^{m(i)}\binom{i}{j}p^i (1-p)^{i-j} = F^{-1}(i,p,\alphaadj)
\end{equation}
The following equivalence holds:
\begin{equation}\label{eq:smaller m}
\alphaadj - \epsilon \leq \sum_{j=0}^{m'(i)}\binom{i}{j}p^i (1-p)^{i-j} = F^{-1}(i,p,\alphaadj-\epsilon)
	\Leftrightarrow \alphaadj \leq \sum_{j=0}^{m'(i)}\binom{i}{j}p^i (1-p)^{i-j} + \epsilon
\end{equation}
Now suppose that $m'(i) > m(i)$ which contradicts lemma \ref{lemma:05:non-decreasing-with-alpha-mtable}.
Then it is that
\[m'(i)>m(i) \Rightarrow \sum_{j=0}^{m'(i)}\binom{i}{j}p^i (1-p)^{i-j} \geq \sum_{j=0}^{m(i)}\binom{i}{j}p^i (1-p)^{i-j}\]
because $\binom{i}{j}p^i (1-p)^{i-j} \geq 0 \; \forall i,j,p$. It follows for $\epsilon >0$ that
\begin{equation}
\alphaadj \leq \sum_{j=0}^{m(i)}\binom{i}{j}p^i (1-p)^{i-j} + \epsilon \leq \sum_{j=0}^{m'(i)}\binom{i}{j}p^i (1-p)^{i-j} + \epsilon
\end{equation}
But then $m'(i)\neq F^{-1}(i,p,\alphaadj-\epsilon)$ because $m(i)$ would be the smaller integer that satisfies Equation~\ref{eq:smaller m}. Thus it has to be that $m'(i)\leq m(i)$.
\end{proof}
\noindent This property shows that, if we reduce $\alpha$ in our binary search, the mass of the corresponding mTable is also reduced or stays the same.
It very usefully implies a criterion to stop the binary search: namely we stop the calculation when the mass of the mTable at the left search boundary equals the right search boundary.
Of course this only works if there exists exactly one legal mTable for each mass, which we proof in the following.
\begin{theorem}
	\label{theorem:05:mtable-mass-injection}
	For fix $p,k$ there exists exactly one legal mTable for each mass $L_1\in \lbrace 1,\ldots,k \rbrace$.
\end{theorem}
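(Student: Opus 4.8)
The plan is to identify each legal mTable with the significance levels that produce it and then track the mass $L_1$ as a function of $\alpha$. For fixed $p,k$ write $g(\alpha)=L_1(\texttt{constructMTable}(p,k,\alpha))=\sum_{i=1}^{k}m(i)$. By Definition~\ref{def:05:construct-mtable-single-test} every legal mTable equals $\texttt{constructMTable}(p,k,\alpha)$ for some $\alpha$, so the statement amounts to showing that $g$ is injective (at most one legal mTable per mass) and attains every value in $\lbrace 1,\ldots,k\rbrace$ (at least one per mass). I would prove these two halves separately, since the first follows almost immediately from the monotonicity already established, whereas the second is where the real work lies.

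For injectivity I would invoke Lemma~\ref{lemma:05:non-decreasing-with-alpha-mtable} directly. Suppose two legal mTables $M=(m(1),\ldots,m(k))$ and $M'=(m'(1),\ldots,m'(k))$ share the same mass, and let $\alpha,\alpha'$ be levels producing them; without loss of generality $\alpha'\le\alpha$. The lemma gives $m'(i)\le m(i)$ for every $i$, so the two vectors are ordered componentwise. A componentwise inequality together with equal coordinate sums forces equality in every coordinate, hence $M=M'$. Thus no mass is shared by two distinct legal mTables, which is exactly the stopping criterion that the binary search relies on.

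For existence I would argue that $g$ sweeps through the target values as $\alpha$ grows. For $\alpha$ below the smallest tail probability $(1-p)^{k}$ every entry of the inverse CDF is $0$, so $g=0$; and once $\alpha$ exceeds $(1-p)^{1}$ every entry is at least $1$, so $g\ge k$. Since, by Lemma~\ref{lemma:05:non-decreasing-with-alpha-mtable}, $g$ is non-decreasing and changes only at the discrete thresholds where some $m(i)=F^{-1}(i,p,\alpha)$ increments, it would suffice to show that at each threshold the mass grows by exactly one: a non-decreasing integer step function that starts at $0$, reaches at least $k$, and takes only unit steps must attain every value in $\lbrace 1,\ldots,k\rbrace$, and together with injectivity this gives exactly one legal mTable per mass.

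The hard part will be the unit-step claim, i.e. ruling out that two coordinates increment at the same $\alpha$. An entry $m(i)$ jumps from $n$ to $n+1$ precisely as $\alpha$ crosses the binomial tail probability $\sum_{j=0}^{n}\binom{i}{j}p^{j}(1-p)^{i-j}$, so a double jump occurs exactly when two such values coincide for distinct pairs $(i,n)$. The crux is therefore pairwise distinctness of these thresholds on $(0,1)$. I expect this to be delicate: the $0\!\to\!1$ thresholds $(1-p)^{i}$ are strictly decreasing and hence distinct, which secures the values of $g$ near the bottom, but a low-level threshold of one coordinate can still collide with a higher-level threshold of another. Indeed, for $p=\tfrac12$ one has $\sum_{j=0}^{0}\binom{1}{j}(\tfrac12)^{j}(\tfrac12)^{1-j}=\sum_{j=0}^{1}\binom{3}{j}(\tfrac12)^{j}(\tfrac12)^{3-j}=\tfrac12$, a genuine binomial coincidence that would make $g$ jump by two. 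Making the unit-step argument rigorous for all admissible $p$ — by excluding such symmetric values, restricting to a regime where no cross-level collisions arise, or otherwise absorbing coincident thresholds into the counting — is the main obstacle I anticipate.
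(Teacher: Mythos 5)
Your injectivity argument is essentially the paper's own proof, just phrased directly instead of by contradiction: the paper supposes two distinct legal mTables of equal mass, observes that they must then deviate in opposite directions (some $m(i)>m(i)'$ and some $m(j)<m(j)'$), and notes that either deviation contradicts Lemma~\ref{lemma:05:non-decreasing-with-alpha-mtable} once the two $\alpha$'s are ordered. That is exactly the contrapositive of your ``componentwise ordering plus equal sums forces equality.'' If anything your phrasing is tighter: the paper's without-loss-of-generality claim that the discrepancies pair up as $|m(i)-m(i)'|=|m(j)-m(j)'|$ is an inessential imprecision that your version avoids.

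Where you diverge is the existence half, and there you have diagnosed something real: the paper never proves it. Its proof establishes only that no two distinct legal mTables can share a mass, and then stops; ``exactly one'' is silently weakened to ``at most one.'' Moreover, your $p=\tfrac12$ collision is a genuine counterexample to the unit-step property you correctly identify as the crux: as $\alpha$ crosses $\tfrac12$, the entry $m(1)$ jumps $0\to 1$ and $m(3)$ jumps $1\to 2$ simultaneously, so the intermediate mass is skipped and no legal mTable attains it. Hence the obstacle you anticipate cannot be overcome in general, and the theorem as stated is too strong. The paper itself implicitly concedes this in Theorem~\ref{theorem:05:number-of-mtables}, which claims only an upper bound (``less or equal to $\frac{k(k-1)}{2}$'') and says that ``at best'' every possible mass is achieved. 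For the purposes the theorem actually serves --- the stopping criterion of the binary search in Algorithm~\ref{alg:05:binarySearch} and the counting bound --- only the uniqueness half is needed, and that half both you and the paper prove correctly; you should simply not attempt to close the existence gap, but note instead that the statement should read ``at most one.''
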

\begin{proof}
	\label{proof:05:mtable-mass-injection}
	We prove this by contradiction: Let $MT_{p,k,\alpha_1}$ and $MT'_{p,k, \alpha_2}$ be two different mTables with $L_1(MT_{p,k,\alpha_1 }) = L_1(MT'_{p,k,\alpha_2})$.\\
	If both are legal then it applies that $\texttt{constructMTable}(p,\alpha_1 ,k)=MT_{p,\alpha_1 ,k}$
	and \\ $\texttt{constructMTable}(p,\alpha_2 ,k)=MT'_{p,\alpha_2 ,k}$.
	Because $MT_{p,k,\alpha_1} \neq MT'_{p,k,\alpha_2}$, without loss of generality entries $m(i), m(i)' , m(j) , m(j)'$ exist in each table, such that $|m(i) - m(i)'| = |m(j) - m(j)'|$ while at the same time $m(i) > m(i)'$ , $m(j) < m(j)'$ for $i<j, i,j \in \lbrace 1, \ldots , k \rbrace$.
	(Think of it as the two entries in each table "evening out", such that both tables have the same mass.)\\
	If $\alpha_1 > \alpha_2$, then the statement $m(j)' > m(j)$ violates Lemma~\ref{lemma:05:non-decreasing-with-alpha-mtable}.
	If $\alpha_2 > \alpha_1$, then the statement $m(i) > m(i)'$ also violates Lemma~\ref{lemma:05:non-decreasing-with-alpha-mtable}.
	The only possibility left is hence that $\alpha_1 = \alpha_2$, which contradicts $MT \neq MT'$, as both are created using function \texttt{constructMTable}.
\end{proof}
With these mathematical properties we can perform a binary search on the continuous $\alpha$-space to find the corrected significance level $\alphaadj$.
This corrected significance is used to compute a final mTable with an overall failure probability $\failprob = \alpha$.
\begin{algorithm}[t!]
	\caption{Algorithm \algoBinomBinary calculates the corrected significance level $\alpha_c$ and the mTable $m_{\alpha_c , k, p)}$ with an overall probability $\alpha$ of rejecting a fair ranking.}
	\label{alg:05:binarySearch} 
	\footnotesize
	\AlgInput{$k$, the size of the ranking to produce; $p$, the expected proportion of protected elements; $\alpha$, the desired significance level.}
	\AlgOutput{$\alphaadj$ the adjusted significance level; \texttt{m\_{adjusted}} the adjusted mTable}
	\AlgComment{initialize all needed variables}
	\texttt{aMin $\leftarrow$ 0};
	\texttt{aMax $\leftarrow \alpha$ };
	\texttt{aMid} $\leftarrow \frac{(\texttt{aMin + aMax})}{2}$ \\
	\texttt{m\_min} $\leftarrow$ \texttt{constructMTable(k,p,aMin)}; 
	\texttt{m\_max} $\leftarrow$ \texttt{constructMTable(k,p,aMax)}; \\
	\texttt{m\_mid} $\leftarrow$ \texttt{constructMTable(k,p,aMid)} \\
	\texttt{maxMass} $\leftarrow$ \texttt{m\_max.getMass()};
	\texttt{minMass} $\leftarrow$ \texttt{m\_min.getMass()};
	\texttt{midMass} $\leftarrow$ \texttt{m\_mid.getMass()}\\
	
	\While{\texttt{minMass} $<$ \texttt{maxMass} AND \texttt{m\_mid.getFailProb()} $\neq \alpha$ }{
		\If{\texttt{m\_mid.getFailProb()} $< \alpha$}{
			\texttt{aMin} $\leftarrow$ \texttt{aMid}
			\texttt{m\_min} $\leftarrow$ \texttt{constructMTable(k,p,aMin)} \\
		}
		\If{\texttt{m\_mid.getFailProb()} $> \alpha$}{
			\texttt{aMax} $\leftarrow$ \texttt{aMid} \\
			\texttt{m\_max} $\leftarrow$ \texttt{constructMTable(k,p,aMax)} \\
		}
		\texttt{aMid} $\leftarrow \frac{(\texttt{aMin + aMax})}{2}$ \\
		\tcp{stop criteria if midMass equals maxMass or midMass equals minMass}
		\If{\texttt{maxMass - minMass == 1}}{
			\texttt{minDiff} $\leftarrow |$\texttt{m\_min.getFailProb() - }$\alpha|$ \\
			\texttt{maxDiff} $\leftarrow |$\texttt{m\_max.getFailProb() - }$\alpha|$ \\
			\tcp{return the $\alpha_c$ which has the lowest difference from the desired significance}			
			\If{\texttt{minDiff} $<$ \texttt{maxDiff}}{
				\Return{\texttt{aMin, m\_min}}
			}
			\Else{
				\Return{\texttt{aMax, m\_max}}
			}
		}
		\tcp{stop criteria if midMaxx is exactly the mass between minMass and maxMass}
		\If{\texttt{maxMass - midMass == 1} AND \texttt{midMass - minMass == 1}}{
			\texttt{minDiff} $\leftarrow |$\texttt{m\_min.getFailProb() - }$\alpha|$ \\
			\texttt{maxDiff} $\leftarrow |$\texttt{m\_max.getFailProb() - }$\alpha|$ \\
			\texttt{midDiff} $\leftarrow |$\texttt{m\_mid.getFailProb() - }$\alpha|$ \\
			\tcp{return the $\alpha_c$ which has the lowest difference from the desired significance}			
			\If{\texttt{midDiff} $\leq$ \texttt{maxDiff} AND \texttt{midDiff} $\leq$ \texttt{minDiff}}{
				\Return{\texttt{aMid, m\_mid}}
			}
			\If{\texttt{minDiff} $\leq$ \texttt{midDiff} AND \texttt{minDiff} $\leq$ \texttt{maxDiff}}{
				\Return{\texttt{aMin, m\_min}}
			}
			\Else{
				\Return{\texttt{aMax, m\_max}}
			}
		}
	}
	\Return{\texttt{aMid, m\_mid}}
\end{algorithm}

\section{Complexity Analysis}

In order to estimate the complexity of the whole procedure (and hence understand its computational feasibility), we need to know how many mTables exist for fix $k$ and $p$.
%
This is the number of non-decreasing sequences of integers that end with a number smaller or equal to $k$ and have length~ $k$.
\begin{theorem}
	\label{theorem:05:number-of-mtables}
	The number of legal mTables for $k,p$ is less or equal to $\frac{k(k-1)}{2}$ .
\end{theorem}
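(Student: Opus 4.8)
The plan is to reduce the counting problem to bounding the number of distinct \emph{masses}, and then invoke the injectivity established in Theorem~\ref{theorem:05:mtable-mass-injection}. First I would use Lemma~\ref{lemma:05:legal-valid-mtable} to replace ``legal'' by the purely combinatorial notion ``valid'': every legal mTable $(m(1),\ldots,m(k))$ is non-decreasing with $m(i+1)\le m(i)+1$ and $m(0)=0$, which already gives the weak bound $m(i)\le i$. The crux of the whole argument is to sharpen this to
\begin{equation}
	m(i)\le i-1\qquad\text{for all } i\in\{1,\ldots,k\}.
\end{equation}
I would get this from the construction $m(i)=F^{-1}(i,p,\alpha)$ of Definition~\ref{def:05:construct-mtable-single-test}: since $m(i)=i$ would force $F(i-1;i,p)=1-p^{i}<\alpha$, and since $1-p^{i}$ is increasing in $i$ with minimum $1-p$, the single regime assumption $\alpha\le 1-p$ (equivalently $m(1)=0$, as seen in Table~\ref{tbl:ranked_group_fairness_table}) already guarantees $m(i)\le i-1$ for every $i$.

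Granting $m(i)\le i-1$, the mass of any legal mTable is bounded by
\begin{equation}
	L_1(\mathrm{mTable})=\sum_{i=1}^{k} m(i)\le\sum_{i=1}^{k}(i-1)=\frac{k(k-1)}{2}.
\end{equation}
Next I would apply Theorem~\ref{theorem:05:mtable-mass-injection}: for fixed $p,k$ the mass map is injective on legal mTables, so distinct legal mTables have distinct integer masses. Hence the number of legal mTables equals the number of distinct masses that actually occur, and all of these lie in $\{1,\ldots,\tfrac{k(k-1)}{2}\}$ once the trivial all-zeros table (excluded by the footnote convention that size-$0$ rankings are not considered) is discarded. Counting the integers in this set yields the claimed bound $\frac{k(k-1)}{2}$.

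As an equivalent sanity check I would use the ``sweep'' picture behind Lemma~\ref{lemma:05:non-decreasing-with-alpha-mtable}: as $\alpha$ grows from $0$ to its maximal admissible value the mTable only increases entrywise, tracing a chain in which each step strictly increases the integer mass; since the total mass can grow by at most $\frac{k(k-1)}{2}$, the chain visits at most that many nontrivial tables.

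The step I expect to be the real obstacle is establishing the sharp per-entry bound $m(i)\le i-1$ rather than the trivial $m(i)\le i$: validity alone is not enough, because $(1,2,\ldots,k)$ is itself a valid sequence of mass $\frac{k(k+1)}{2}$. Ruling it out is exactly the distributional fact that $F^{-1}(i,p,\alpha)$ cannot reach its maximum $i$ in the significance regime considered, so I would state the assumption $\alpha\le 1-p$ explicitly rather than leave it implicit. The second delicate point is the endpoint bookkeeping — whether the all-zeros table is counted — since that single table is precisely what separates the stated bound $\frac{k(k-1)}{2}$ from $\frac{k(k-1)}{2}+1$; I would resolve it through the paper's convention that rankings of size $0$ are not considered.
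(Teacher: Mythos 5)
Your proposal follows the same skeleton as the paper's proof: bound the achievable masses, invoke the injectivity from Theorem~\ref{theorem:05:mtable-mass-injection} to conclude there is at most one legal mTable per mass, and count masses. The genuine difference is in the step everything hinges on, namely that no legal mTable has mass above $\frac{k(k-1)}{2}$. The paper asserts this ``by construction,'' exhibiting $(m(1)=1,m(2)=2,\ldots,m(k)=k)$ as the maximal table and writing its mass as $\frac{k(k-1)}{2}$ --- an arithmetic slip, since that table has mass $\frac{k(k+1)}{2}$; worse, the paper's own remarks that $m(1)\in\{0,1\}$ and that $m(2)=2$ is possible when $m(1)=1$ show that entries $m(i)=i$ are not excluded, so the asserted maximum is unjustified as written. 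You instead prove the per-entry bound $m(i)\le i-1$ from the distribution: $m(i)=i$ forces $\alpha > F(i-1;i,p)=1-p^{i}\ge 1-p$, so the explicit hypothesis $\alpha\le 1-p$ (equivalently $m(1)=0$) rules it out for every $i$. This is precisely the ingredient the paper's argument is missing, and making the hypothesis explicit is not pedantry but necessary: over unrestricted $\alpha$ the statement is false (already for $k=1$ both $(0)$ and $(1)$ are legal while the claimed bound is $0$), since Definition~\ref{def:05:legal-mtable} quantifies over all $\alpha$.

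One caveat, on the endpoint bookkeeping you flag yourself: your resolution is misattributed. The footnote excludes rankings of size $0$, not the all-zeros mTable, which is a perfectly legal table of size $k$ (it is produced, e.g., by $\alpha=0$, or by $p=0.1,\alpha=0.1$ as in Table~\ref{tbl:ranked_group_fairness_table}). Since that table cannot be discarded, what your argument actually delivers is that masses lie in $\{0,1,\ldots,\frac{k(k-1)}{2}\}$, i.e.\ the bound $\frac{k(k-1)}{2}+1$, or $\frac{k(k-1)}{2}$ for tables of positive mass. The paper commits the identical off-by-one silently (minimum mass $0$, maximum mass $\frac{k(k-1)}{2}$, conclusion ``at most $\frac{k(k-1)}{2}$ masses''), so you are no worse off; but you should state the conclusion honestly rather than lean on the footnote. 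In summary: same route as the paper, with your version supplying a correct proof of the key mass bound that the paper only asserts, at the price of an explicit (and needed) assumption $\alpha\le 1-p$, and sharing the paper's harmless-looking but real off-by-one at the lower endpoint.
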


\begin{proof}
	\label{proof:05:number-of-mtables}
	Given the proof of Theorem~\ref{theorem:05:mtable-mass-injection} we can count the number of legal mTables for fix $p,k$ as follows: The maximum mass of a legal mTable of length $k$ is by construction
	$L_1 ((m(1) = 1,m(2) = 2, \ldots, m(k) = k)) = \sum_{i=1}^k m(i) = \frac{k(k-1)}{2}$.
	Following definition~\ref{def:05:valid-mtable} the 	    entry $m(1)$ is the smallest entry or equal to all other entries.
	Furthermore, because this mTable is legal and following Definition~\ref{def:05:legal-mtable}, the mTable is a result of Algorithm~\ref{alg:constructMTable}.
	Thus $m(1) = F^{-1}(1,p,\alpha) \in \lbrace 0,1\rbrace$. In other words, $m(1)$ can only be $0$ or $1$.

	In turn $m(2)$ can only be $2$, if $m(1)$ was $1$ (otherwise $m(2)<2$).
	Accordingly, the minimum mass of a legal mTable is $L_1((m(1),m(2), \ldots, m(k))) = 0$, if all $m(i)=0$.
	Following Lemma~\ref{lemma:05:non-decreasing-with-alpha-mtable}, we can create mTables with higher masses by increasing $\alpha$.
	Furthermore, following Theorem~\ref{theorem:05:mtable-mass-injection}, if a legal mTable exists for a given mass and parameters $k,p$, then this is the only existing legal mTable with that mass.
	We know that there is a theoretical minimum mass for legal mTables ($L_1 =0$) which would occur, for example, if we set $\alpha = 0$ assuming $p<1$.
	There also exists a theoretical maximum mass for legal mTables which is $\frac{k(k-1)}{2}$.
	At best, we can achieve every possible mass between those two to extremes.
	It follows that there are at most $\frac{k(k-1)}{2}$ masses for legal mTables of size $k$ for a fix~$p$.
\end{proof}
\subsubsection{\algoMtable complexity}\label{subsubsec:construct-mtable-complexity}
\algoMtable computes the inverse binomial cdf for $k$ positions of the ranking and stores each of the computed values in the MCDF Cache.
This leads to a time complexity of $\mathcal{O}(k) \cdot \mathcal{O}(F^{-1}(p,k,\alpha))$.
Assuming a constant time for the calculation of the binomial probability mass function, the time complexity of $F^{-1}(p,k,\alpha)$ for our implementation is $\mathcal{O}(i^2)$, where $i$ is the current position we calculate $F^{-1}$ for.
Note that the complexity of $F^{-1}$ depends on the desired accuracy of the computation.
The space complexity is $\mathcal{O}(k)$, if we do not store any intermediate results for future calculations.
\begin{table}[b!]
	\vspace{2mm}
	\scalebox{0.75}{
		\begin{tabular}{lll}
			\toprule
			\textbf{Algorithm} & \textbf{Time Complexity} & \textbf{Space Complexity}\\
			\midrule
			\rowcolor[HTML]{C0C0C0}
			\algoMtable & $\mathcal{O}(k) \cdot \mathcal{O}(F^{-1}(p,k,\alpha))$ & $\mathcal{O}(k)$ \\
			\algoRecursive & $\mathcal{O}(\texttt{\algoMtable}) + \mathcal{O}(\prod_{j=1}^{m(k)}b(j) \cdot O(\texttt{binomPDF}))$ & $\mathcal{O}(k)$ \\
			\rowcolor[HTML]{C0C0C0}
			\algoBinomBinary & $\mathcal{O}(\log{}k) \cdot (\mathcal{O}(\texttt{\algoRecursive}))$ & $\mathcal{O}(k)$  \\
			\bottomrule
		\end{tabular}
	}
	\caption{Time complexity for all algorithms for one protected group without pre-computed results.\label{tbl:time-space-binom}}
\end{table}

\subsubsection{\algoRecursive complexity}\label{subsubsec:success-prob-complexity}
The algorithm \algoRecursive has time complexity $\mathcal{O}(\prod_{j=1}^{m(k)}b(j) \cdot O(\texttt{binomPDF}))$ as explained in section \ref{subsubsec:adjustment-binomial}.
Before we compute the success probability, we have to calculate the corresponding mTable and blocks $b$ which adds $\mathcal{O}(k) \cdot \mathcal{O}(F^{-1}(p,k,\alpha))$ and $\mathcal{O}(k)$ to the time complexity of \algoRecursive.
Overall we get $\mathcal{O}(k) \cdot \mathcal{O}(F^{-1}(p,k,\alpha)) + \mathcal{O}(\prod_{j=1}^{m(k)}b(j) \cdot O(\texttt{binomPDF}))$.
For the sake of readability we will write $\mathcal{O}($\algoMtable$) + \mathcal{O}(\prod_{j=1}^{m(k)}b(j) \cdot O(\texttt{binomPDF}))$.
The space complexity is $\mathcal{O}(k)$ for the maximum number of blocks plus $\mathcal{O}(k)$ for the stored probabilities at each position.
\subsubsection{\algoBinomBinary complexity}\label{subsubsec:binom-binary-complexity}
A general binary search on a list of $n$ items has a time complexity of $\mathcal{O}(\log{}n)$. We showed with theorem \ref{theorem:05:number-of-mtables} that the list of mTables on which we will search binary, has a maximum size of $\frac{k(k-1)}{2}$.
Thus the binary search for $\alpha_c$ has a complexity of $\mathcal{O}(\log{}\frac{k(k-1)}{2}) = \mathcal{O}(\log{}k^2) = \mathcal{O}(\log{}k)$.
For each binary search step we need $\mathcal{O}(\mathcal{O}(k) \cdot \mathcal{O}(F^{-1}(p,k,\alpha)))$ to compute the new mTable, as well as $\mathcal{O}(\prod_{j=1}^{m(k)}b(j) \cdot O(\texttt{binomPDF}))$ for its fail probability.
Overall we get $\mathcal{O}(\log{}k) \cdot (\mathcal{O}(\prod_{j=1}^{m(k)}b(j) \cdot O(\texttt{binomPDF})) + \mathcal{O}(\mathcal{O}(k) \cdot \mathcal{O}(F^{-1}(p,k,\alpha))))$, which we will write as $\mathcal{O}(\log{}k) \cdot (\mathcal{O}(\texttt{\algoRecursive}))$.
The space complexity is $\mathcal{O}(k)$ since we only store the three mTables with their respective fail probability at a time.

\section{Implementation}
We implemented the presented correction and algorithms into \textsc{FairSearch}, an open-source API for fairness in ranked search results~\cite{zehlike2020fairsearch}. 
The code can be found at \url{https://github.com/fair-search}.



\bibliographystyle{ACM-Reference-Format}
\bibliography{main}
\end{document}